\DeclareMathOperator*{\argmax}{arg\,max}
\newtheorem{lemma}{Lemma}
\title{\textsc{Lum\'awig}: \\An Efficient Algorithm for Dimension Zero Bottleneck Distance Computation in Topological Data Analysis}
\author[$\star\dagger$]{Paul Samuel Ignacio}
\author[$\dagger$]{Jay-Anne B. Bulauan}
\author[$\ddag$]{David Uminsky}
\affil[$\dagger$]{University of the Philippines Baguio, Baguio City, Philippines 2600}
\affil[$\ddag$]{University of Chicago, Chicago, Il, United States 60637}
\affil[$\star$]{Correspondence: ppignacio@up.edu.ph}
\begin{document}
\date{}
\maketitle

\begin{abstract}
Stability of persistence diagrams under slight perturbations is a key characteristic behind the validity and growing popularity of topological data analysis in exploring real-world data. Central to this stability is the use of Bottleneck distance which entails matching points between diagrams. Use of this metric in practical studies has, however, been few and sparingly because of the computational obstruction, especially in dimension zero where the computational cost explodes with the growth of data size. We present \textsc{Lum\'awig}, a novel efficient algorithm to compute dimension zero bottleneck distance between two persistent diagrams which runs significantly faster and provides significantly sharper approximates with respect to the output of the original algorithm than any other available algorithm. We bypass the overwhelming matching problem in previous implementations of the bottleneck distance, and prove that the zero dimensional bottleneck distance can be recovered from a very small number of matching cases. We show that \textsc{Lum\'awig} generally enjoys linear complexity as shown by empirical tests. We also present an application that leverages dimension zero persistence diagrams and the bottleneck distance to produce features for classification tasks.
\end{abstract}

\keywords{Bottleneck distance; persistence diagrams; persistent homology; MNIST classification.}

\section{Introduction}
Topological data analysis (TDA) has gathered significant interest from a wide range of researchers because of its novel approach and use of classical tools from algebraic topology for extracting descriptive features from data. Succinctly, topological data analysis captures and records the persistence \cite{zomorodian1, zomorodian2} of algebraically computable topological signatures, and regards it as a measure of significance for different features embedded in the structure of data. Meriting the growing popularity for this approach, and central to its relevance and viability in interrogating real-world data, is its stability under slight perturbations -- small discrepancies between measurements within data lead to small differences in the recorded persistence of features. This cornerstone stability result \cite{cohen} relies on classic bottleneck matchings to evaluate, measure, and bound changes between two records of feature persistence. These records, called \emph{persistence diagrams}, are a collection of points in the extended plane where the coordinates represent the birth and death times of the recorded features. 

Given two persistence diagrams $X$ and $Y$, the \emph{bottleneck distance} between them is defined as
$$
d_B(X,Y) = \inf_{\phi} \sup_{x\in X} ||x-\phi(x)||_{\infty}
$$
where the infimum is taken over all bijections $\phi:X\sqcup \Delta \to Y\sqcup \Delta$ and $\Delta$ is the diagonal. In general terms, the bottleneck distance measures the cost to transform one diagram to another. The first, and for a long time the only, publicly available implementation of the bottleneck distance for persistence diagrams is in the library \textsc{Dionysus}, released in 2010, by Morozov \cite{dio}. This implementation uses a variant of the Hungarian algorithm \cite{munkres} for the assignment problem. 

Understandably, because of the overwhelming matching step in the computation, this first implementation of the bottleneck distance between two persistence diagrams was considerably slow by practical standards. Consequently, while the theoretical side of topological data analysis has made extensive use of the bottleneck distance for advancing the theory \cite{zigzag,samir2,ignacio}, first computational uses have been few and sparingly. Some notable examples include applications to classification of hepatic lesions \cite{lesions}, and analysis of time-series data \cite{cvpr} and simulated hippocampal networks \cite{samir}. Most applications of TDA, instead, tap into persistence-based topological features via another class of objects, called \emph{persistence landscapes} \cite{bubenik}, that record the persistence of features as a function, thus affording access to desirable properties of the underlying function space. A major motivation for this detour to landscapes is the ability to generate topological summaries that are compatible to classical tools in statistics, and even machine learning.

 In 2017, Morozov et al. \cite{Kerber} provided an improved implementation of the bottleneck distance in the library \textsc{Hera} by exploiting geometry. Their approach follows closely the work of Efrat et al. \cite{Efrat}. For the sets $X_0$ and $Y_0$ of orthogonal projections on the diagonal $\Delta$ of points respectively from $X$ and $Y$, and the sets $U = X \cup Y_0$ and $V = X_0\cup Y$, they consider the weighted complete bipartite graph $G = (U\sqcup V, U\times V, w)$ where $w:U\times V\to \mathbb{R}_{\geq 0}$ is given by $$w(u,v) = \begin{cases}
 ||u-v||_{\infty} & \mbox{ if } u \in X \mbox{ or } v\in Y\\
 0 & \mbox{otherwise}.
  \end{cases}
 $$
With this, the bottleneck computation problem can be recast in the following manner: if $G[r]$ is the subgraph of $G$ having all edges $e$ of weight $w(e)\leq r$, then the bottleneck distance of $G$ is the minimal value $r$ such that $G[r]$ contains a perfect matching. Hence the bottleneck distance can be recovered by combining a binary search on the edge weights of $G$ with a test for a perfect matching. For the matching step, they augment the Hopcroft-Karp algorithm \cite{hopcroft} by appealing to a near-neighbor data structure (a k-d tree) to search for the best candidate pair for a query point, pruning from the search the subtrees (and hence all other candidates within them) whose enclosing box is further away from the query than the current best candidate. This circumvents the overwhelming matching problem by significantly shrinking down the combination pool to retrieve the best matching. To approximate complexity, they fit curves of the form $cn^{\alpha}$ and found a best fit with $\alpha = 1.4$. This translates to speed-up from \textsc{Dionysus} already by a factor of 400 on diagrams with 2,800 points, and opened opportunities for several works that examine larger \cite{mnist} or more complex \cite{traffic, lung} data sets.

 We take inspiration from this idea of exploiting the geometry of persistence diagrams to extract computational speed-up. By considering persistence diagrams whose components are assumed to be born at the beginning of the filtration, we can approach the problem via a different framework, birthing a new efficient algorithm for computing the bottleneck distance. The key idea is to begin with a specific initial bijection that one can methodically modify to optimize the norm between matched points. This process allows us to identify all possible instances where the bottleneck matching is achieved, and the exact value for the bottleneck distance, significantly bypassing the overwhelming matching step in previous implementations.
 
 Partly in keeping with nomenclature traditions in this area of TDA, we name this algorithm \textsc{Lum\'awig} as a nod to a deity in the northern Philippines, where the algorithm was developed. \textsc{Lum\'awig} is significantly faster than the state-of-the-art and provides significantly sharper approximates with respect to the output of the original algorithm than any other available algorithm. We benchmark \textsc{Lum\'awig} against all available algorithms in terms of running time and accuracy. 

Our motivation for this work is to clear the computational obstruction in the use of bottleneck distance in applications. In the Filipino language, \textsc{Lum\'awig} also means to extend, broaden, or expand. Our hope is that this contribution will serve as a catalyst in the further development of the theory that leverages persistence diagrams and the bottleneck distance similar to what has been achieved for persistence landscapes, and will usher in a new era of integrating TDA into the science of big data. As a proof of concept, we use \textsc{Lum\'awig} to generate features for the classification of digit images from the MNIST data set.

\section{Bypassing matchings}
We propose to bypass the overwhelming matching problem in the computation of 0-dimensional bottleneck distance by showing that the value produced by the bottleneck distance formula can be recovered by considering only a few cases. We will show that these cases naturally come up in the process of minimizing the output of the norm. 

We first note that for most practical applications to data analysis of 0-dimensional persistence diagrams, where all components are assumed to be born at the beginning of the filtration for persistent homology, all non-trivial points lie in the vertical axis (or equivalently for persistence barcodes, all bars begin at time $t=0$). Hence, in this case, if $\delta_x$, and $\delta_{\phi(x)}$ are the death times respectively for $x$ and its matched point $\phi(x)$, we have that
\begin{equation}\label{normdef}
||x - \phi(x) ||_{\infty} =  \begin{cases}
\max(\delta_x,\delta_{\phi(x)})/2 &\mbox{ if } \phi(x) \in \Delta\\
|\delta_x-\delta_{\phi(x)}| &\mbox{ otherwise.}
\end{cases}
\end{equation}
This suggests that while it is natural to do a point-to-point matching between diagrams, there are cases when we are better off matching a point to the diagonal. For a point $x\in X$ and $\phi(x)\in Y$, this happens precisely when 
\begin{equation}\label{ineq1}
\max(\delta_{x},\delta_{\phi(x)})>2\min(\delta_{\phi(x)},\delta_{x}).
\end{equation}
See Figure \ref{bott}a. Therefore, unless (\ref{ineq1}) is satisfied, it is our priority to match a non-trivial point in a diagram $X$ with a non-trivial point in another diagram. This supports the interpretation that the bottleneck distance is the cost of transforming one diagram to another. 
\begin{figure}
\includegraphics[width =\textwidth, height = 0.24\textwidth]{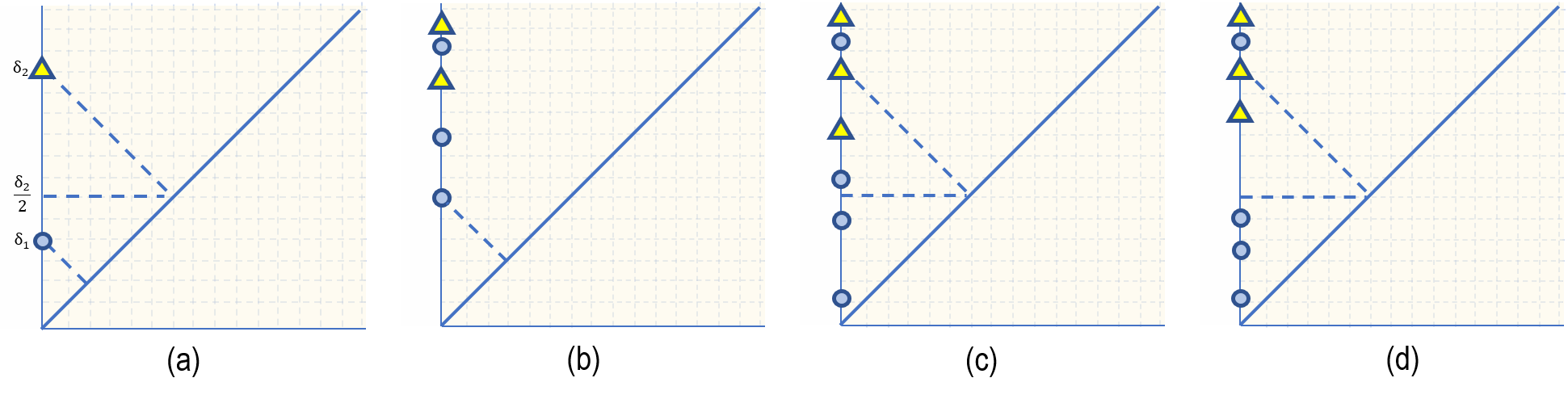}
\caption{Matching points between persistence diagrams.}
\label{bott}
\end{figure}

We are now ready to present our algorithm for computing 0-dimensional bottleneck distance between two persistence diagrams. We first induce and ordering of the death times in both diagrams and define a bijection that we can methodically modify to optimize the norm between matched points and recover the desired matching that achieves the bottleneck distance. The proof of Lemma \ref{lem1} provides the basic argument that allows us to bypass the overwhelming matching problem. Lemma \ref{lem2} proceeds in the same manner and 
identifies all other possible instances where the bottleneck matching is achieved, and the exact bottleneck distance in each case.  

Let $X$ and $Y$ be two 0-dimensional persistence diagrams whose death time entries are arranged from largest to smallest. Equivalently, $X$ and $Y$ can be thought of as persistence barcodes whose bars are arranged from longest to shortest. Without loss of generality, assume that $X$ has at most as many points as $Y$ has. We remark that this pre-processing is equivalent to considering the bijection $\phi$ that matches points between $X$ and $Y$ according to the relative ranking of death times from largest to smallest, and where unmatched points in $Y$ are matched to the diagonal. Let $N = length(X)$ and define 
$$Z = [z_i]_1^{length(Y)} \mbox{ where } z_i = \begin{cases}
|x_i-y_i| &\mbox{ if } i \leq N\\
y_i/2 &\mbox{ otherwise}
\end{cases}$$
and $l = \argmax(Z)$. 

\begin{lemma}\label{lem1}
Let $X$, $Y$, $Z$, $N$ and $\phi$ be defined as above. If $N<length(Y)$ and $\max(Z)\leq y_{N+1}/2$, then 
$$
d_B(X,Y) = y_{N+1}/2
$$
where $y_{N+1}$ is the largest death time of a point in $Y$ matched to the diagonal.
\end{lemma}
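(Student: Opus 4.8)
The plan is to establish the two inequalities $d_B(X,Y)\le y_{N+1}/2$ and $d_B(X,Y)\ge y_{N+1}/2$ separately, the first from the explicit bijection $\phi$ and the second from a pigeonhole argument on the longest bars of $Y$. First I would observe that the quantity $y_{N+1}/2$ in the statement is nothing but the entry $z_{N+1}$ of $Z$ (recall $z_i=y_i/2$ for $i>N$), so the hypothesis $\max(Z)\le y_{N+1}/2$ together with $z_{N+1}\le\max(Z)$ forces $\max(Z)=y_{N+1}/2$. By (\ref{normdef}), the bijection $\phi$ described just before the statement — matching $x_i$ to $y_i$ for $i\le N$ and each leftover point $y_{N+1},\dots,y_{length(Y)}$ to its orthogonal projection on $\Delta$ (and remaining diagonal points to diagonal points) — has cost $\sup_{x}\|x-\phi(x)\|_\infty=\max(Z)=y_{N+1}/2$. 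Hence $d_B(X,Y)\le y_{N+1}/2$.

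For the lower bound I would argue as follows. Let $\psi:X\sqcup\Delta\to Y\sqcup\Delta$ be an arbitrary bijection. The points $y_1,\dots,y_{N+1}$ are $N+1$ distinct elements of $Y$, but $X$ has only $N=length(X)$ points, so at least one index $j\le N+1$ satisfies $\psi^{-1}(y_j)\in\Delta$; that is, $\psi$ matches $y_j$ to some point $(t,t)$ of the diagonal. Since all bars begin at $t=0$, we have $y_j=(0,\delta_{y_j})$, and $\|y_j-(t,t)\|_\infty=\max(|t|,|\delta_{y_j}-t|)\ge\delta_{y_j}/2$, with the minimum attained at the orthogonal projection. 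Because the death times are sorted in non-increasing order and $j\le N+1$, we get $\delta_{y_j}\ge y_{N+1}$, so the cost of $\psi$ is at least $y_{N+1}/2$. As $\psi$ was arbitrary, $d_B(X,Y)\ge y_{N+1}/2$, and combining with the upper bound yields equality.

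The hard part will be the lower bound — not the computation, which is short, but getting the setup exactly right. One must apply pigeonhole to the top $N+1$ death times of $Y$ (not to $X\sqcup\Delta$), and one must check that matching $y_j$ to an \emph{arbitrary} diagonal point, rather than just its orthogonal projection, still incurs cost at least $\delta_{y_j}/2$; this is what makes the bound valid for every competing bijection and not merely for the ``natural'' ranking-based ones. The upper bound, by contrast, is essentially bookkeeping once one notices that $y_{N+1}/2$ already occurs as the entry $z_{N+1}$ of $Z$, so that the hypothesis pins $\max(Z)$ down to exactly $y_{N+1}/2$.
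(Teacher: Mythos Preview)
Your proof is correct and follows essentially the same two-part strategy as the paper: exhibit $\phi$ to get the upper bound, then use a pigeonhole argument on the $N+1$ longest bars of $Y$ to get the lower bound. Your version is in fact more careful than the paper's --- you make explicit that $z_{N+1}=y_{N+1}/2$ forces $\max(Z)=y_{N+1}/2$, and you verify that matching $y_j$ to an \emph{arbitrary} diagonal point (not just its projection) still costs at least $\delta_{y_j}/2$ --- whereas the paper compresses the lower bound into the single observation that any competing bijection must send some $y_j$ with $j\le N+1$ to the diagonal.
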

\begin{proof}
For the bijection $\phi$ corresponding to the pre-processing described above, it follows that
$$
\max_{x\in X} ||x-\phi(x)||_{\infty} = y_{N+1}/2.
$$
To see why $\phi$ achieves the infimum over all bijections between $X$ and $Y$, note that any other bijection $\psi$ produces a death time for a point in $Y$ matched to the diagonal that is at least as large as $y_{N+1}/2$. Therefore $\max_{x\in X} ||x-\phi(x)||_{\infty}\leq \max_{x\in X} ||x-\psi(x)||_{\infty}$. See Figure\ref{bott}b.
\end{proof}

\begin{lemma}
Let $X$, $Y$, $Z$, $N$, $l$ and $\phi$ be defined as above, and let $\zeta$ be the second largest entry of $Z$.
\begin{enumerate}
\item If $\max(Z)\leq \max(x_l,y_l)/2$, then $d_B(X,Y) = \max(Z).$
\item If $\zeta < \max(x_l,y_l)/2<\max(Z)$, then $d_B(X,Y) = \max(x_l,y_l)/2.$
\item If $\zeta \geq \max(x_l,y_l)/2$ and $m\geq l$ for every $m$ such that $z_m\geq \max(x_l,y_l)/2$, then $d_B(X,Y) = \max(x_l,y_l)/2.$
\item If $\zeta \geq \max(x_l,y_l)/2$ and there exists $m < l$ such that $z_m\geq \max(x_l,y_l)/2$, then there exists a bijection $\tau$ between $X$ and $Y$ such that one of the three preceding cases holds and where 
$$\max ||x - \tau(x) ||_{\infty} < \max ||x - \phi(x) ||_{\infty}.$$
\end{enumerate}
\label{lem2}
\end{lemma}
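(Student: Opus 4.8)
The plan is to treat Case 4 as the residual branch of an exhaustive dichotomy and to establish it as the inductive step of an induction on $|X|$, so that the induction hypothesis supplies Lemma \ref{lem2} in full for every strictly smaller instance. First I would record that Cases 1--4 are exhaustive for $(X,Y)$: writing $r:=\max(Z)$ and $\mu:=\max(x_l,y_l)/2$, if $r\le\mu$ we are in Case 1; otherwise $r>\mu$, and then either $\zeta<\mu$ (Case 2), or $\zeta\ge\mu$, in which case Case 3 or Case 4 holds according to whether or not every $m$ with $z_m\ge\mu$ satisfies $m\ge l$. So in Case 4 we have $r>\mu$, $\zeta\ge\mu$, and some index $m<l$ with $z_m\ge\mu$, and all that is needed is a bijection $\tau$ with $\max_{x\in X}||x-\tau(x)||_{\infty}<r$ that reduces us to one of Cases 1--3.

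The modification I would use is to \emph{un-pair the $Z$-maximizing indices}. Let $I=\{i:z_i=r\}$ and fix $l=\argmax(Z)$ to be the smallest maximizer. Since $r>\mu$ we must have $l\le N$ (if $l>N$ then $z_l=y_l/2=\mu$, contradicting $r>\mu$); consequently every $i\in I$ satisfies $l\le i\le N$ (a maximizer with $i>N$ would have $z_i=y_i/2\le y_l/2\le\mu<r$), hence $x_i\le x_l$ and $y_i\le y_l$ by the sorting. Let $\tau_1$ agree with $\phi$ except that for each $i\in I$ the points $x_i,y_i$ are matched to their orthogonal projections on $\Delta$ (the released diagonal points being re-matched among themselves at cost $0$, so that $\tau_1$ is again a bijection $X\sqcup\Delta\to Y\sqcup\Delta$). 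By (\ref{normdef}) the matched norm at each $i\in I$ drops from $r$ to $\max(x_i,y_i)/2\le\mu$, while all other matched norms are unchanged, so
$$\max_{x\in X}||x-\tau_1(x)||_{\infty}\ \le\ \max\big(\mu,\ \max\{z_i:z_i<r\}\big)\ <\ r\ =\ \max_{x\in X}||x-\phi(x)||_{\infty}.$$
Here $\mu<r$, and $\{z_i:z_i<r\}$ is nonempty because $l$ is the smallest maximizer while some $m<l$ has $z_m\ge\mu$, so $z_m<r$. I would also emphasise why \emph{all} of $I$ must be released: if the maximum of $Z$ is attained more than once, discarding only index $l$ leaves the maximal matched norm equal to $r$.

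The structural point that closes the argument is that the leftover configuration is itself the initial bijection of a genuine smaller instance. Deleting a matched pair $(x_i,y_i)$ from the two descending sequences leaves them descending and leaves the surviving matches equal to the rank-matching; hence, with $S=\{1,\dots,N\}\setminus I$, the bijection $\tau_1$ is exactly the rank-matching of the instance $X'=(x_i)_{i\in S}$, $Y'=(y_i)_{i\in S}\cup(y_{N+1},\dots,y_M)$ (re-sorted), extended by the diagonal matches on $I$, and $|X'|=|S|<N$ with still $|X'|\le|Y'|$. Apply the dichotomy to $(X',Y')$. If it lies in Cases 1--3 we take $\tau:=\tau_1$; if it lies again in Case 4, the induction hypothesis provides, for $(X',Y')$, a bijection $\tau'$ in one of Cases 1--3 with maximal matched norm strictly below that of the rank-matching of $(X',Y')$, and the extension of $\tau'$ by the diagonal matches on $I$ is the required $\tau$. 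Either way $\max_{x\in X}||x-\tau(x)||_{\infty}\le\max_{x\in X}||x-\tau_1(x)||_{\infty}<r$, because each un-pairing costs at most $\mu$, or at most a strictly smaller later value of $\max Z$, all of which are $<r$, while the surviving rank-matched block contributes at most its own $\max Z<r$. The recursion is harmless: Case 4 cannot be entered once $|X|\le 1$ (it forces $l\ge 2$), and each time it is entered $|S|$ and $\max Z$ both strictly decrease, so the induction is well-founded.

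The part I expect to be the main obstacle is the bookkeeping that makes this reduction rigorous rather than merely plausible: checking that the un-pairing genuinely defines a bijection of $X\sqcup\Delta\to Y\sqcup\Delta$ with no hidden cost introduced by the rerouting of the diagonal points, and --- more delicately --- checking that the leftover configuration is \emph{literally} the rank-matching of a smaller instance of the same shape, which is what legitimises both the induction and the claim that we land inside Cases 1--3 rather than cycling. The observation that deleting matched pairs preserves descending order and the relative rank-matching is elementary but is the linchpin; handling ties in $\argmax$ (releasing all of $I$, and taking $l$ smallest so that $\max(x_i,y_i)/2\le\mu$ for every $i\in I$) is the secondary subtlety on which the strict decrease rests.
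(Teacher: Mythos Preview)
Your proposal addresses only Case~4 and never proves Cases~1--3. The induction you set up takes ``Lemma~\ref{lem2} in full'' as the hypothesis for strictly smaller instances, but this does not establish Cases~1--3 for the \emph{current} instance, nor for the base $|X|\le 1$. Cases~1--3 are where the bottleneck distance is actually pinned down: each asserts that a specific value is the infimum over \emph{all} bijections, and the paper handles this with a separate argument showing that for any other bijection $\psi$, either the maximal non-trivial matched norm under $\psi$ is at least $\max(Z)$, or some point with death time at least $\max(x_l,y_l)$ is forced to the diagonal. Your proposal contains no such optimality argument; without it the induction has no base and Case~4's reduction has nothing to terminate on.

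For Case~4 itself your reduction differs from the paper's. You un-pair only the maximizing indices $I=\{i:z_i=r\}$ and recurse on $(X',Y')$ with $X'=(x_i)_{i\in S}$ and $Y'$ still carrying the tail $y_{N+1},\dots,y_M$. The paper instead sends \emph{all} $x_j,y_j$ with $j\ge l$ to the diagonal and restricts to the equal-length prefixes $X'=(x_1,\dots,x_{l-1})$, $Y'=(y_1,\dots,y_{l-1})$; this is cleaner because the reduced instance has $|X'|=|Y'|$, its $Z'$ is literally the prefix $(z_1,\dots,z_{l-1})$, and the Case~4 hypothesis (some $m<l$ with $z_m\ge\mu$) guarantees the maximum of $||x-\tau(x)||_\infty$ is attained on the non-trivially matched part. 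Your version is correct but buys nothing over the simpler cut. Where your treatment is genuinely more careful than the paper is in making the recursion explicit and handling ties in $\argmax$: the paper merely asserts that after one reduction ``one of the three previous cases above holds,'' which need not be literally true (the reduced instance can again land in Case~4) and is really shorthand for the while-loop recursion in Algorithm~\ref{algo}.
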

\begin{proof}
\begin{enumerate}
\item It follows from our remark immediately after (\ref{normdef}) that 
$$\max ||x - \phi(x) ||_{\infty} = \max(Z) \leq \max(x_l,y_l)/2 = \max ||x - \phi^{\prime}(x) ||_{\infty}$$
where $\phi^{\prime}$ is the bijection that matches both $x_l$ and $y_l$ to the diagonal, and coincides with $\phi$ otherwise. For any other bijection $\psi$, if $x^{\prime}\in X$ such that $|x^{\prime} - \psi(x^{\prime})|$ is maximum among all non-trivial matchings, either $\max(Z)\leq |x^{\prime} - \psi(x^{\prime})|$, or $\max(x_l,y_l)\leq \max(x^{\prime},\psi(x^{\prime}))$. See Figure \ref{bott}c. If $N< length(Y)$, then a similar argument as that in Lemma \ref{lem1} holds. The conclusion now follows. 
\item In this case, the same bijection $\phi^{\prime}$ in the previous case yields
$$\max ||x - \phi^{\prime}(x) ||_{\infty} =  \max(x_l,y_l)/2 < \max(Z) = \max ||x - \phi(x) ||_{\infty}.$$
The same argument in the previous case holds for any other bijection $\psi$. Hence, the inequality above implies the conclusion.
\item For the bijection $\phi^{\prime\prime}$ that sends $x_m$ and $y_m$ to the diagonal for all such $m$, and coincides with $\phi$ otherwise (see Figure\ref{bott}d), we have that 
$$\max ||x - \phi^{\prime\prime}(x) ||_{\infty} =  \max(x_l,y_l)/2 < \max(Z) = \max ||x - \phi(x) ||_{\infty}.$$
Again, since the same argument in the first case holds for any other bijection $\psi$, the previous inequality implies the conclusion.
\item Define the bijection $\tau$ that sends $x_j$ and $y_j$ to the diagonal for all $j \geq l$, and coincides with $\phi$ otherwise. Then we have that
$$\max ||x - \tau(x) ||_{\infty} < \max(Z) = \max ||x - \phi(x) ||_{\infty},$$
Moreover, note that $\max ||x - \tau(x) ||_{\infty}$ depends only on $||x - \tau(x)||_{\infty}$ for non-trivially matched $x$ and $\tau(x)$. Therefore, we can consider only the subsets $X^{\prime}$ and $Y^{\prime}$ respectively of $X$ and $Y$ whose points are non-trivially matched by $\tau$. In this case $length(X^{\prime}) = length(Y^{\prime})$ and one of the three previous cases above holds.
\end{enumerate}
The proof is now complete.
\end{proof}

The two Lemmas above provide the theoretical basis for the bypass approach of the \textsc{Lum\'awig} algorithm. Together, they take advantage of the specific form of dimension zero persistence diagrams being considered, and the methodical approach to optimize norms induced by a specific matching. The complete pseudo code for the algorithm is given below.

\begin{algorithm}[h]
\caption{\textsc{Lum\'awig} algorithm for computing 0-dimensional bottleneck distance between two persistence diagrams}
\label{algo}
\begin{algorithmic}[1]
\State \textbf{Input: } Two dimension zero persistence diagrams $X$ and $Y$ such that $X\neq Y$ and where $X$ has fewer than or as many points as $Y$.
\State \textbf{Output: } The bottleneck distance between $X$ and $Y$.

\State Initialization\; $d \leftarrow 0$, $X\leftarrow$ death times of points from $X$ sorted from largest to smallest, $Y\leftarrow$ death times of points from $Y$ sorted from largest to smallest, $N = length(X)$, $Z\leftarrow$ vector $[z_i:=|x_i-y_i|]_{1}^N$, $l = \argmax(Z)$, $d_{temp} = \max(Z)$

		\If {$length(X) \neq length(Y) $ and $ d_{temp} < y_{N+1}/2$}
			\State $d = (y_{N+1})/2$;
		\Else
			\While {$length(Z)>1$} 
				\If {$\text{Second largest entry of } Z < \max(x_l,y_l)/2<d_{temp}$}
					\State $d = \max(x_l,y_l)/2$
					\State $break$
				\ElsIf {$\text{Second largest entry of } Z \geq \max(x_l,y_l)/2$}
					\If{ For every $m$ for which $z_m\geq \max(x_l,y_l)/2$, $m\geq l$ }
						\State $d = \max(x_l,y_l)/2$
						\State $break$
					\Else
						\State Trim off all $z_m, x_m, y_m$ for $m\geq l$; update $l$ and $d_{temp}$
						\If{$length(Z)=1$}
							\State $d = \min(d_{temp}, \max(x_l,y_l)/2)$
							\State $break$
						\EndIf
					\EndIf
				\Else
					\State $d = d_{temp}$
					\State $break$
				\EndIf
			\EndWhile	
	   \EndIf	
\end{algorithmic}
\end{algorithm}

\section{Benchmarking}
We perform two stages of benchmarking against other publicly available implementations of the bottleneck distance. The first stage is in terms of computational running time and relative difference with respect to the original algorithm implemented in the \textsc{Dionysus} library included in the \textsc{R} package \textsc{TDA} \cite{TDAR}. This stage involves persistence diagrams with as many as 900 points and highlights the computational obstructions with current implementations of the bottleneck distance.

The second stage is also done in terms of running time, but the relative difference is with respect to the \textsc{R} implementation of \textsc{Lum\'awig}. Only the faster implementations are considered in this stage as it involves persistence diagrams with as many as 30,000 points.

\subsection{Benchmarking against all available algorithms}
Figure \ref{oldrun} shows the running time (in seconds) of four algorithms for computing the bottleneck distance between two persistence diagrams: the \textsc{Dionysus} implementation  (in \textsc{R}), the current state-of-the-art \textsc{Hera} (implemented in \textsc{C++} and wrapped in \textsc{Python}), a new implementation \textsc{Persim} (\textsc{Python}) \cite{persim}, and \textsc{Lum\'awig}, implemented both in \textsc{R} and \textsc{Python}. The benchmarking is done by first simulating 100 0-dimensional persistence diagrams with 50 points. Each dimension zero persistence diagram is simulated using a set of positive numbers as death times uniformly chosen from a range twice as wide as the number of points. We pair each diagram with another simulated diagram not necessarily having the same number of points\footnote{The second set of diagrams have as much as $80\%$ more or fewer points.}, then compute the bottleneck distance (up to 10 decimal places) between the pair using the bottleneck implementations above. The running time of each algorithm is recorded, and the distribution summary of 100 run times for each algorithm is plotted out as a boxplot. For \textsc{Hera}, we follow the experimental setup from \cite{Kerber} and set $\delta = 0.01$. We repeat this process while increasing the number of points in the base persistence diagram by 50 until we reach 500 points for each base persistence diagram.
\begin{figure}
  \centering
\includegraphics[width = \textwidth, height = 0.3\textwidth]{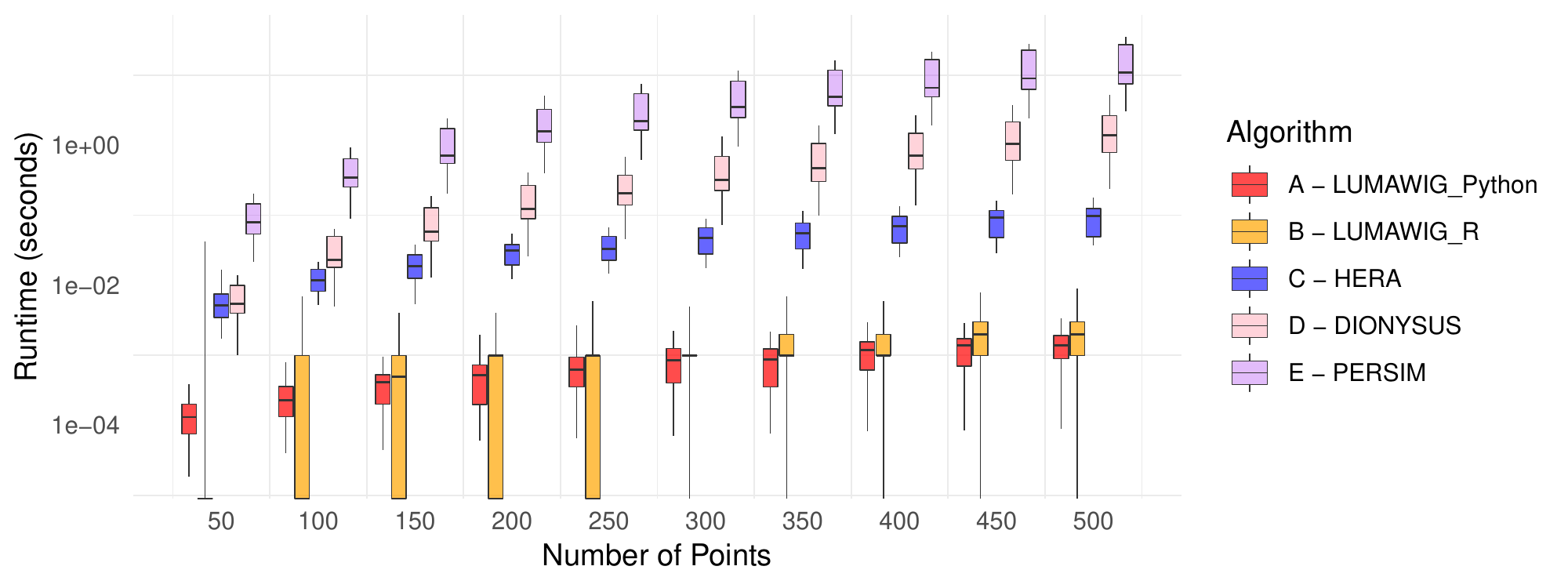}
\caption{Boxplots of running times (seconds in $\log$ scale) from different algorithms.}
\label{oldrun}
\end{figure}

\begin{figure}[h]
\begin{subfigure}{0.5\textwidth}
  \centering
\includegraphics[width =  \textwidth, height = 0.3\textwidth]{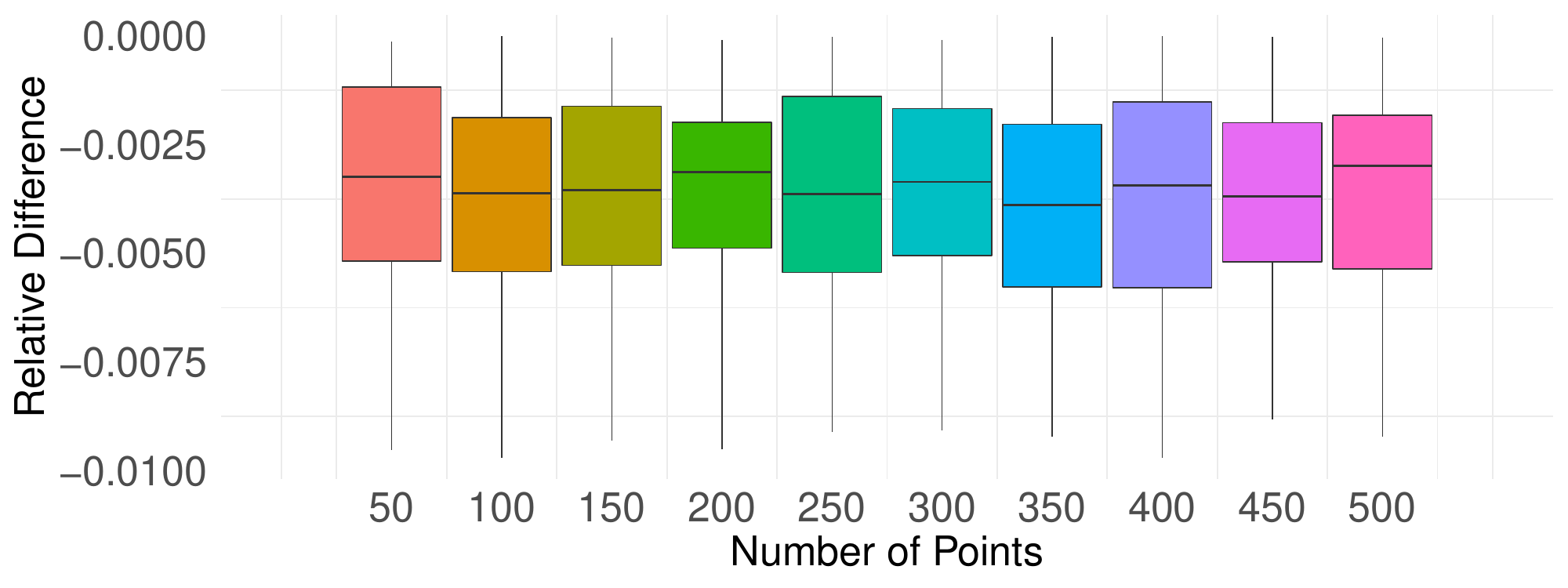}
\caption{\textsc{Hera}.}
\label{reldiff1}
\end{subfigure}
\begin{subfigure}{0.5\textwidth}
  \centering
\includegraphics[width = \textwidth, height = 0.3\textwidth]{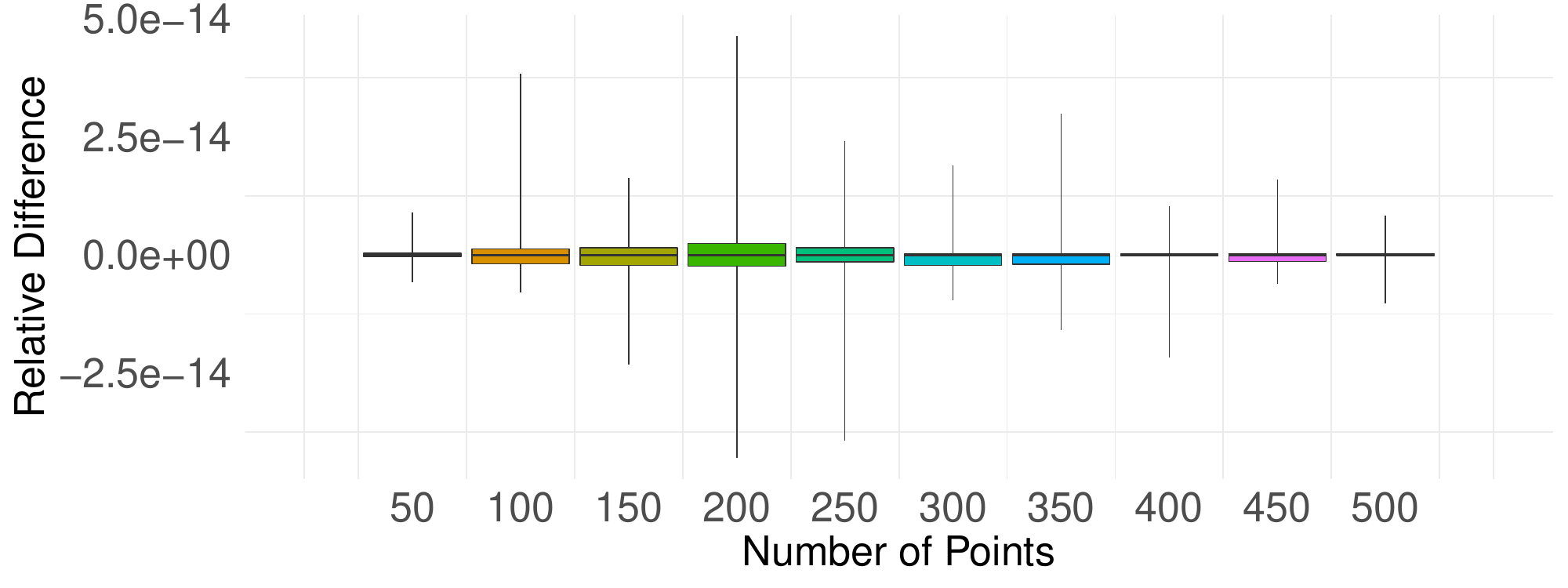}
\caption{\textsc{Persim}.}
\label{reldiff2}
\end{subfigure}
\begin{subfigure}{0.5\textwidth}
  \centering
\includegraphics[width = \textwidth, height = 0.3\textwidth]{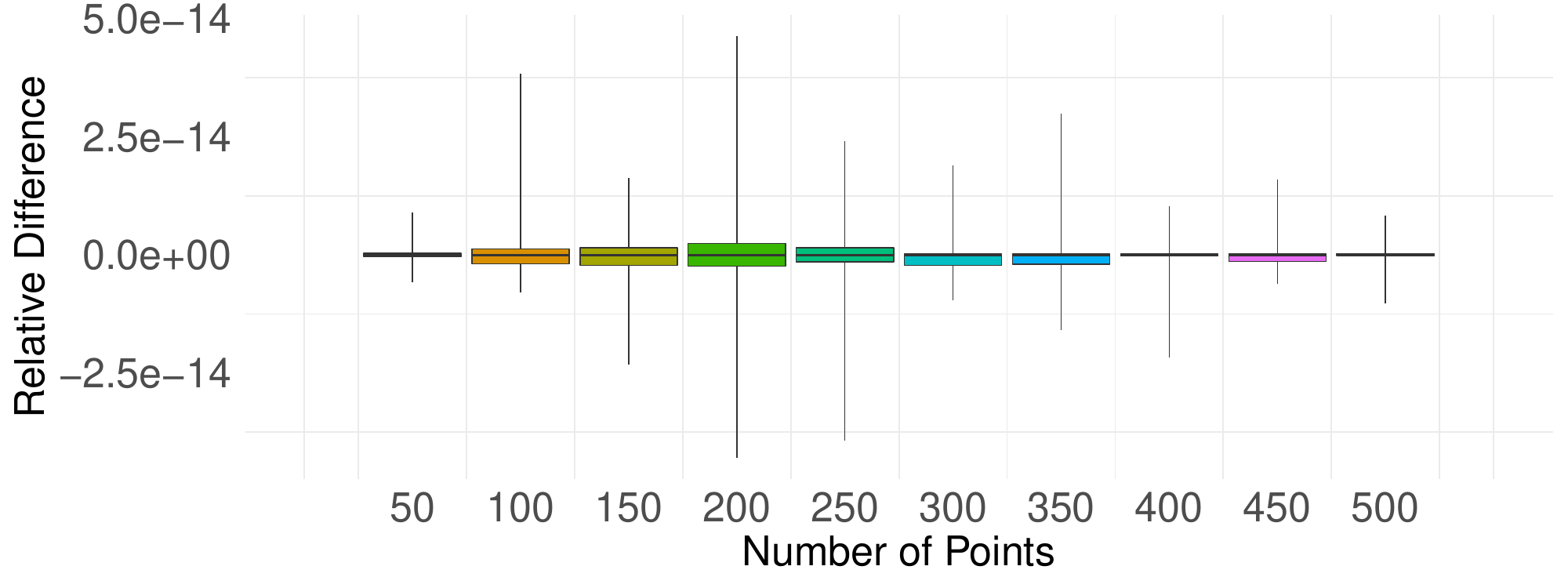}
\caption{\textsc{Lum\'awig}$_{\textsc{Py}}$.}
\label{reldiff3}
\end{subfigure}
\begin{subfigure}{0.5\textwidth}
  \centering
\includegraphics[width = \textwidth, height = 0.3\textwidth]{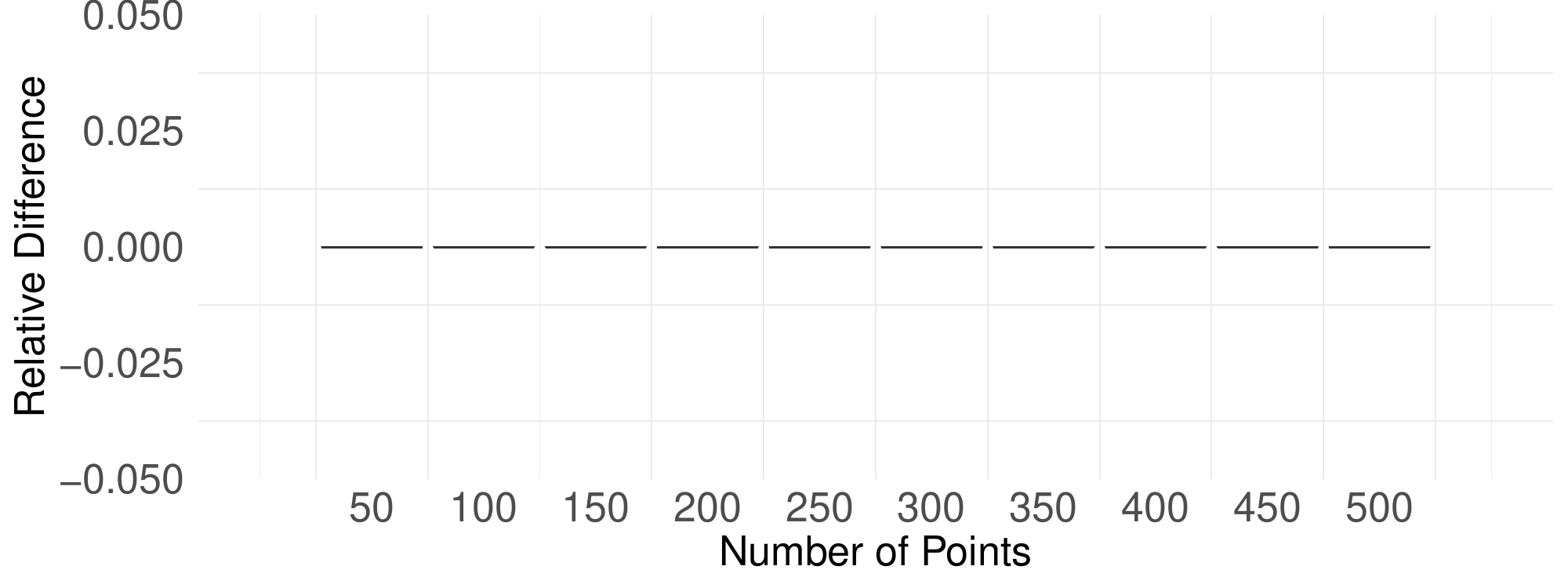}
\caption{\textsc{Lum\'awig}$_{\textsc{R}}$.}
\label{reldiff4}
\end{subfigure}
\caption{Boxplots of relative differences of the bottleneck computation output of the indicated implementation from \textsc{Dionysus}.}
\label{olddiff}
\end{figure}

We also compare the computed bottleneck distance against the output of \textsc{Dionysus}. Relative differences of the outputs of three other implementations from \textsc{Dionysus} are computed for all 100 pairs of persistence diagrams. From these bottleneck computations, descriptive summaries are obtained and plotted as boxplots in Figure \ref{olddiff}. Is it worth noting that \textsc{Hera} consistently overestimates the zero dimensional bottleneck distance relative to \textsc{Dionysus} as seen in Figure \ref{reldiff1}. Another important observation is that the output of \textsc{Lum\'awig}$_{\textsc{Py}}$ recovers that of \textsc{Persim} at a much less computational running time. Finally, we highlight that \textsc{Lum\'awig}$_{\textsc{R}}$ recovers the exact output values of the original implementation in \textsc{Dionysus}.

We remark here that while this stage of benchmarking is indeed confined within extremely small data sets, this situation in fact represents what is currently accessible to most researchers, and highlights the clear computational obstruction in the use of persistence diagrams and bottleneck distance in applications with currently available algorithms.

\subsection{Benchmarking \textsc{Lum\'awig} on larger data sets}
We perform a second stage of benchmarking against the current state-of-the-art implementation of the bottleneck distance in \textsc{Hera}. We again simulate 100 0-dimensional persistence diagrams and pair each diagram with another simulated diagram not necessarily having the same number of points. Similar descriptive measures as in the earlier stage are considered from the 100 pairs of diagrams with increasing number of points from 1,000 to 30,000. The choice of benchmarking bottleneck computation to at most 30,000 points was to draw comparison with that of \textsc{Hera} in \cite{Kerber}. One difference in our benchmarking is that the number of points in the two diagrams we are comparing need not be equal. 

\begin{figure}[h]
 \centering
\includegraphics[width = \textwidth, height = 0.3\textwidth]{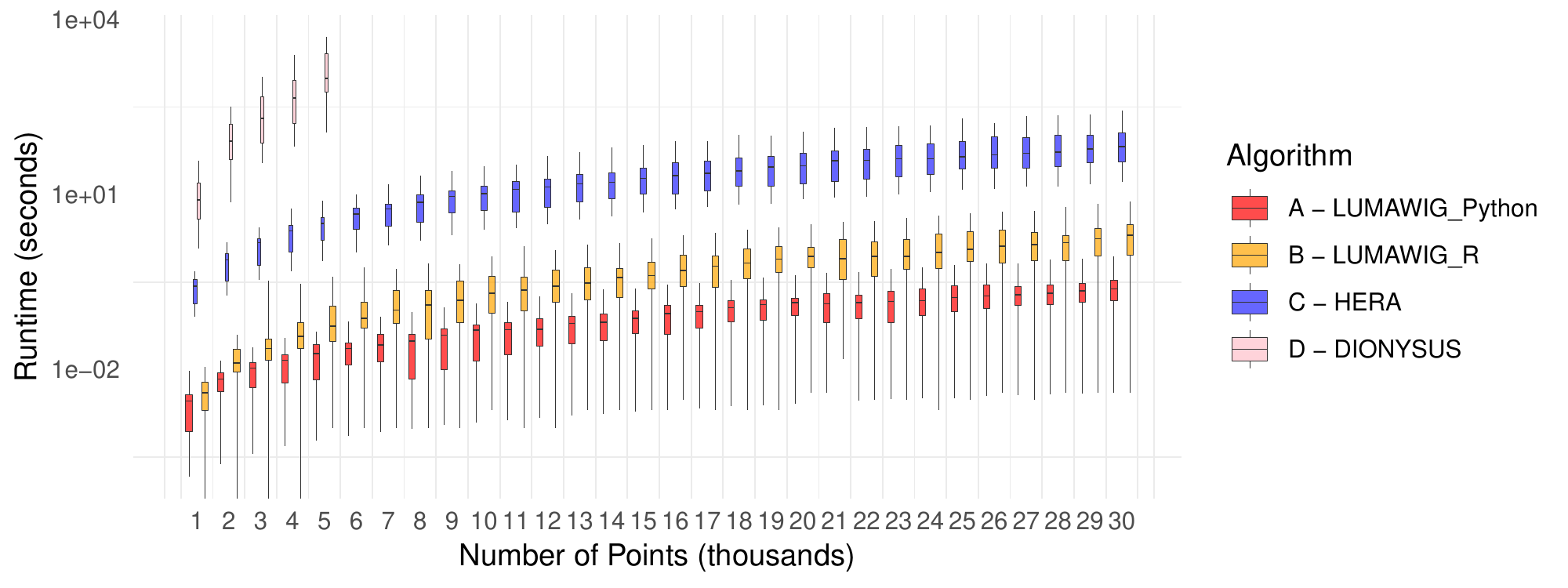}
\caption{Running time (seconds in $\log$ scale) of \textsc{Lum\'awig} versus the current state-of-the-art implementation in \textsc{Hera}. Five boxplots for the running time of the original algorithm in \textsc{Dionysus} are superimposed for reference.}
\label{newrun}
\end{figure}

Figure \ref{newrun} shows the running time distribution of 100 dimension zero bottleneck distance computations over increasing diagram sizes. Note that the vertical axis is displayed in logarithmic scale. Only five boxplots for the running time of the original algorithm implemented in \textsc{Dionysus} are superimposed to provide reference for the state-of-the-art \textsc{Hera} and our two implementations of \textsc{Lum\'awig}. A quick inspection reveals that both implementations of \textsc{Lum\'awig} are consistently several orders of magnitude faster than the current state-of-the-art \textsc{Hera}. The use of the same pairs of simulated persistence diagrams for bottleneck computations across implementations allowed for paired tests of significant difference in running time relative to \textsc{Hera}. These significant values, computed at $\alpha=0.95$ level of confidence, appear in Table \ref{tab1}.

\begin{table}[h]
\caption{Summary of significant decrease (at confidence level $\alpha=0.95$) in running time (in seconds) for paired tests versus \textsc{Hera}. Column labels are in thousands of points.}
\centering
\label{tab1}

\begin{tabular}{lcccccccccc}
\toprule
&1&2&3&4&5&6&7&8&9&10\\
\midrule
$\textsc{Lum\'awig}_{\textsc{R}}$&0.230&0.651&1.203&1.971&2.744&3.957&5.003&6.813&8.348&9.983\\
$\textsc{Lum\'awig}_{\textsc{Py}}$&0.231&0.659&1.221&2.004&2.797&4.037&5.107&6.928&8.498&10.181\\
\midrule
&11&12&13&14&15&16&17&18&19&20\\
\midrule
$\textsc{Lum\'awig}_{\textsc{R}}$&11.029&12.227&14.985&15.733&18.983&21.588&23.580&26.801&29.425&33.316\\
$\textsc{Lum\'awig}_{\textsc{Py}}$&11.255&12.483&15.296&16.078&19.410&22.087&24.124&27.438&30.153&34.129\\
\midrule
&21&22&23&24&25&26&27&28&29&30\\
\midrule
$\textsc{Lum\'awig}_{\textsc{R}}$&38.555&39.818&41.080&44.324&49.933&54.441&57.183&60.196&66.510&72.948\\
$\textsc{Lum\'awig}_{\textsc{Py}}$&39.427&40.734&42.073&45.407&51.129&55.725&58.517&61.605&68.200&74.879\\
\bottomrule
\end{tabular}
\end{table}

\begin{figure}[h]
\begin{minipage}{0.45\textwidth}
  \centering
    \begin{subfigure}{\textwidth}
  \centering
\includegraphics[width =  \textwidth, height = 0.5\textwidth]{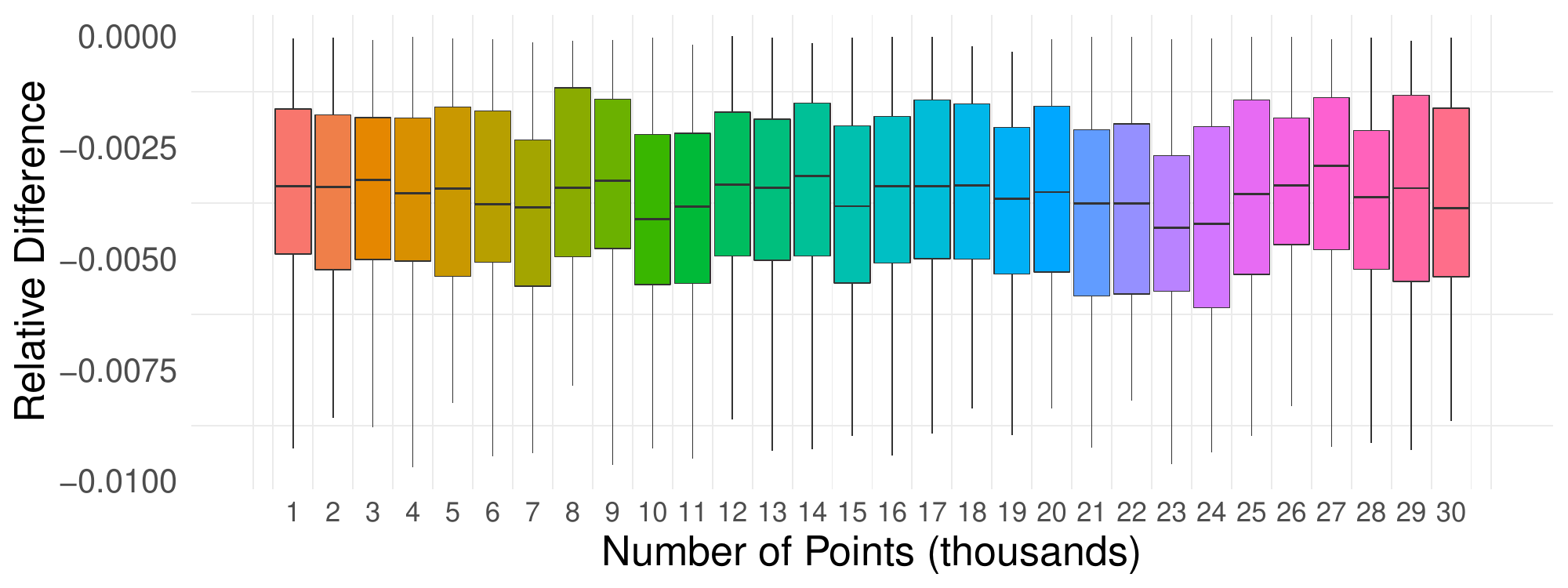}
\caption{\textsc{Hera} versus \textsc{Lum\'awig}$_\textsc{R}$.}
\label{reldiffDio}
\end{subfigure}
    \begin{subfigure}{\textwidth}
    \vspace{10pt}
  \centering
\includegraphics[width = \textwidth, height = 0.5\textwidth]{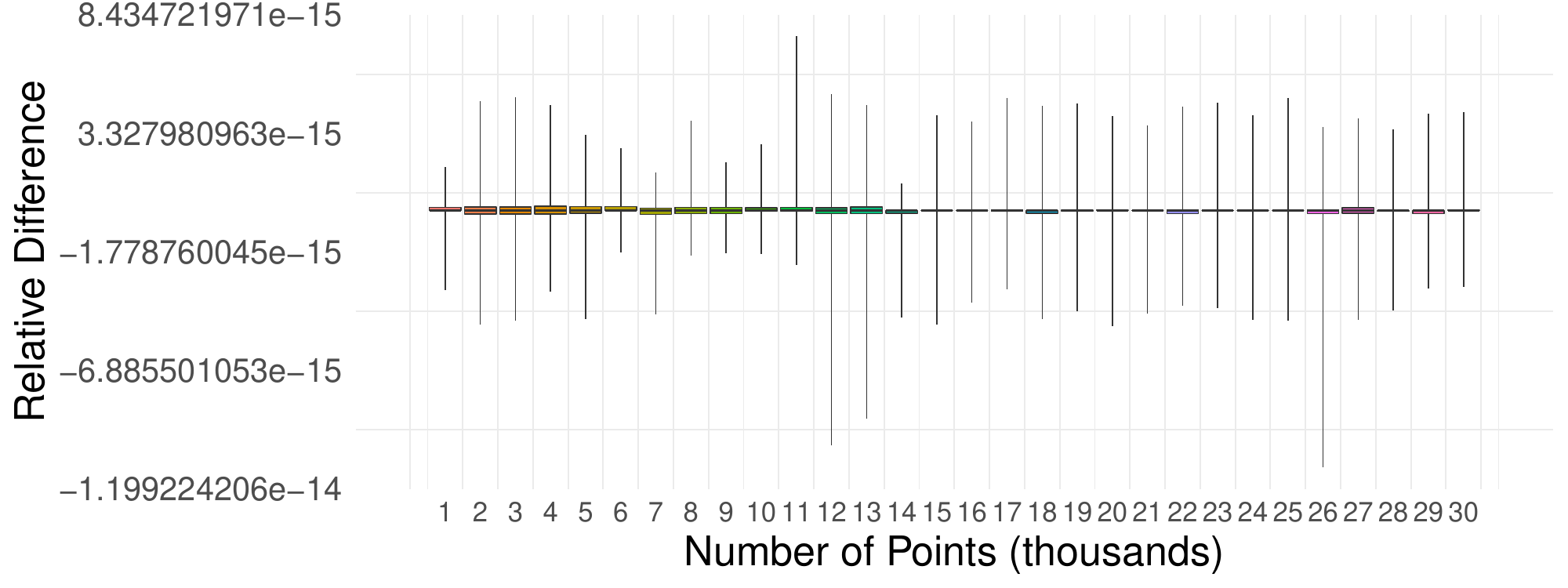}
\caption{\textsc{Lum\'awig}$_\textsc{Py}$ versus \textsc{Lum\'awig}$_\textsc{R}$.}
\label{reldiffByboPy}
\end{subfigure}
    \end{minipage}%
      \centering
\begin{minipage}{0.55\textwidth}%
\begin{subfigure}{\textwidth}
 \centering
\includegraphics[trim= 0 65 20 0,clip,width = \textwidth, height = \textwidth]{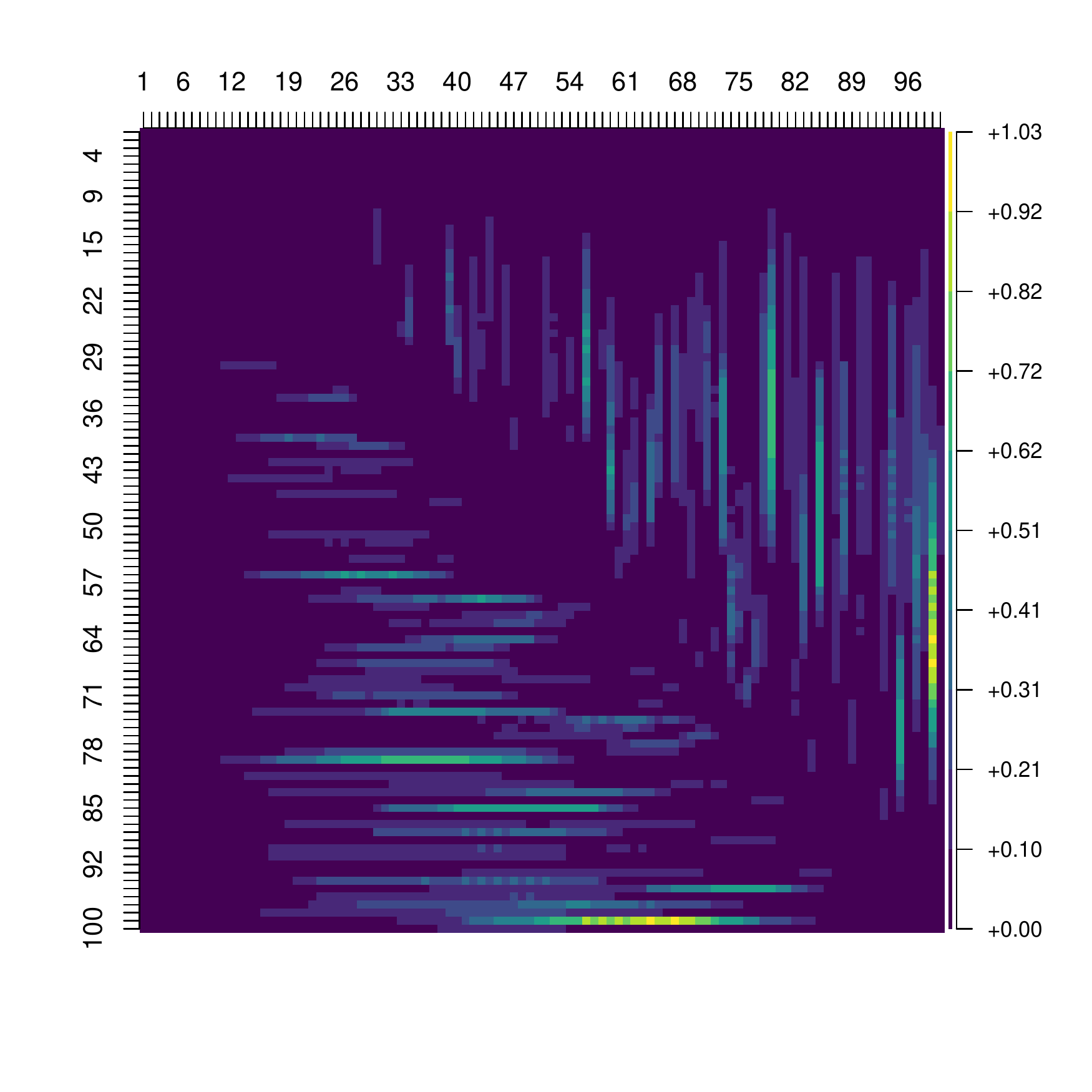}
\caption{Heatmap.}
\label{heatmap}
\end{subfigure}
\end{minipage}%
\caption{(a)-(b) Boxplots of relative differences between the bottleneck computation outputs of the indicated pair of implementations. (c) Heat map of the median running times of \textsc{Lum\'awig}$_{\textsc{R}}$. Each pixel represents the median running time (in seconds) for 100 computations of dimension zero bottleneck distance between diagrams. The number of points in the diagrams are in units of 1000.}
\label{botdiffall}
\end{figure}

As \textsc{Lum\'awig}$_\textsc{R}$ yields exact values for the bottleneck distance relative to the original \textsc{Dionysus} implementation, we use it as basis in the computation of relative differences in this stage. Figures \ref{reldiffDio} and \ref{reldiffByboPy} show the relative difference in the computed dimension zero bottleneck distance respectively of \textsc{Hera} and \textsc{Lum\'awig}$_\textsc{Py}$ with respect to that of \textsc{Lum\'awig}$_\textsc{R}$. Consistent with the comparison between the outputs of \textsc{Hera} and \textsc{Dionysus} in Figure \ref{reldiff1}, \textsc{Hera} consistently overestimates the dimension zero bottleneck distance with respect to that of \textsc{Lum\'awig}$_\textsc{R}$. In contrast, relative differences between the two implementations of \textsc{Lum\'awig} can be attributed to rounding differences between \textsc{Python} and \textsc{R}.

\subsection{Complexity analysis}
Figure \ref{heatmap} shows a heat map of the median running time of \textsc{Lum\'awig}$_\textsc{R}$ over 100 computations per pixel of the bottleneck distance between pairs of persistence diagram with varying number of points: for $i\leq j$, pixel $(i,j)$ represents the median running time for the computation of the bottleneck distance between a diagram with $i$ thousands of points and another diagram with $j$ thousands of points, such that each set of points has death times uniformly chosen from the interval range $(0,2000i)$ and $(0,2000j)$ respectively. It can be inferred from this figure that the best running times happen along the main diagonal, as well as the upper and left portions of the heat map. These correspond to two specific cases: when the diagrams have equal number of points, or when one diagram has overwhelmingly more points than the other. In contrast, regions in the heat map that show increased running times correspond to the case when a diagram that has a large number of points is compared to another that has about half as many points. This observation is supported in the next figure.
\begin{figure}[t]
 \centering
\includegraphics[width = \textwidth, height=0.63\textheight]{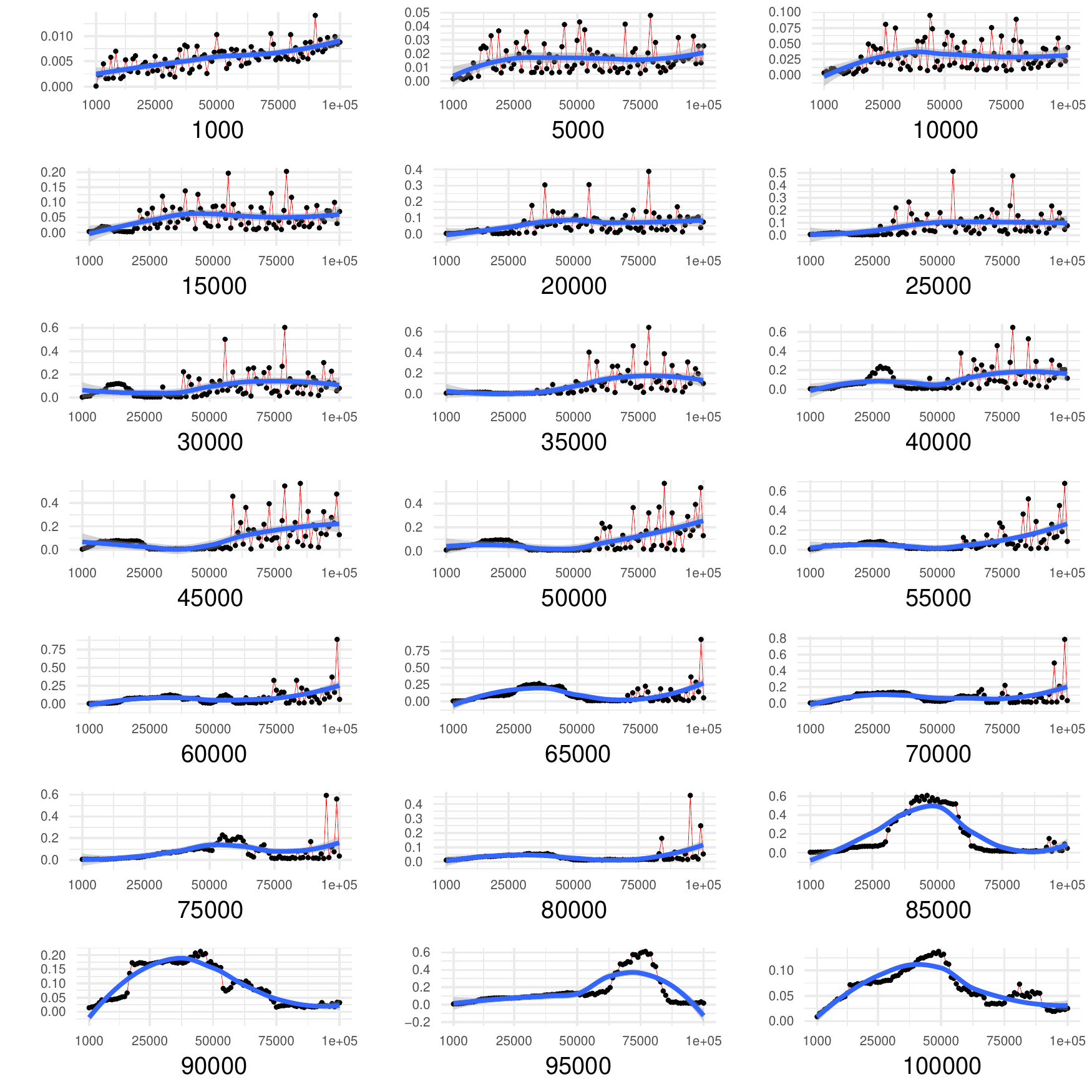}
\caption{Scatter plots with fitted curves of the median running times (in seconds) of \textsc{Lum\'awig}$_\textsc{R}$ over 100 computations of dimension zero bottleneck distance between a base diagram with the labeled number of points and a diagram with $k$ thousands of points, for $k=1,2,...,100$.}
\label{graphs}
\end{figure}

Figure \ref{graphs} shows several scatter plots with fitted curves of the median running times of \textsc{Lum\'awig}$_\textsc{R}$ over 100 computations of dimension zero bottleneck distance. The label in each scatterplot represents the number of points in the fixed base diagram, and a point in the scatterplot at the $k$ thousand mark along the horizontal axis represents the median running time over 100 computations of the bottleneck distance between the base diagram and another diagram with $k$ thousand points whose death times are uniformly chosen from the interval range $(0,2k)$.

To further investigate the observations above, we examine the performance of \textsc{Lum\'awig}$_\textsc{R}$ in the computation of dimension zero bottleneck distance in four pairs of settings for size of the diagrams and the range of values the death times are drawn from. The first is when \textsc{Lum\'awig}$_\textsc{R}$ is tasked to compare two persistence diagrams with the same number of points whose death times are drawn from the same range of values. We calculate the dimension zero bottleneck distance over 100 pairs of persistence diagrams of equal sizes starting from 1,000 points to 1,000,000 points. Every diagram is simulated in the same manner as the previous experiments. Median running times are then plotted and fitted with a regression curve. Midspread and range for every 100 computations at every unit of 1,000 points are superimposed to illustrate the distribution of running times. Figure \ref{equalboth} shows an excellent linear fit ($R^2 = 0.99$) for the running time. We also highlight the observed experimental result that the running time between two diagrams each having 1 million points with death times drawn from the range $(0, 2000000)$ averages to between 2 and 3 tenths of a second.

The second setting involves two diagrams of the same size but the range of death values for the second is half as wide as the first. In this case, we see in Figure \ref{equalsizehalfrange} that the running time trend is perfectly fitted with a linear curve. The third setting considers two diagrams where the second has half as many points as the first. We remark that this setting differs from that performed for Figure \ref{heatmap} in that the range where the death times are drawn from for the simulated diagrams in this experiment is the same for the two diagrams. We do this to ensure that any observed significant difference in performance is attributable only to fixed difference in the number of points between the diagrams. As we observe an increased running time for  \textsc{Lum\'awig}$_\textsc{R}$ in this case, we compute only to until there are 100,000 points in the larger diagram. Figure \ref{halfsizeequalrange} shows two fitted regression curves: a quadratic fit with $R^2 = 1$ and a linear fit with $R^2 = 0.95$. We highlight that even for the case where \textsc{Lum\'awig}$_\textsc{R}$ evidently takes longer to compute the dimension zero bottleneck distance, a linear model provides a very good fit for the trend.

\begin{figure}
\begin{subfigure}{\textwidth}
 \centering
\includegraphics[width = \textwidth, height = 0.2\textwidth]{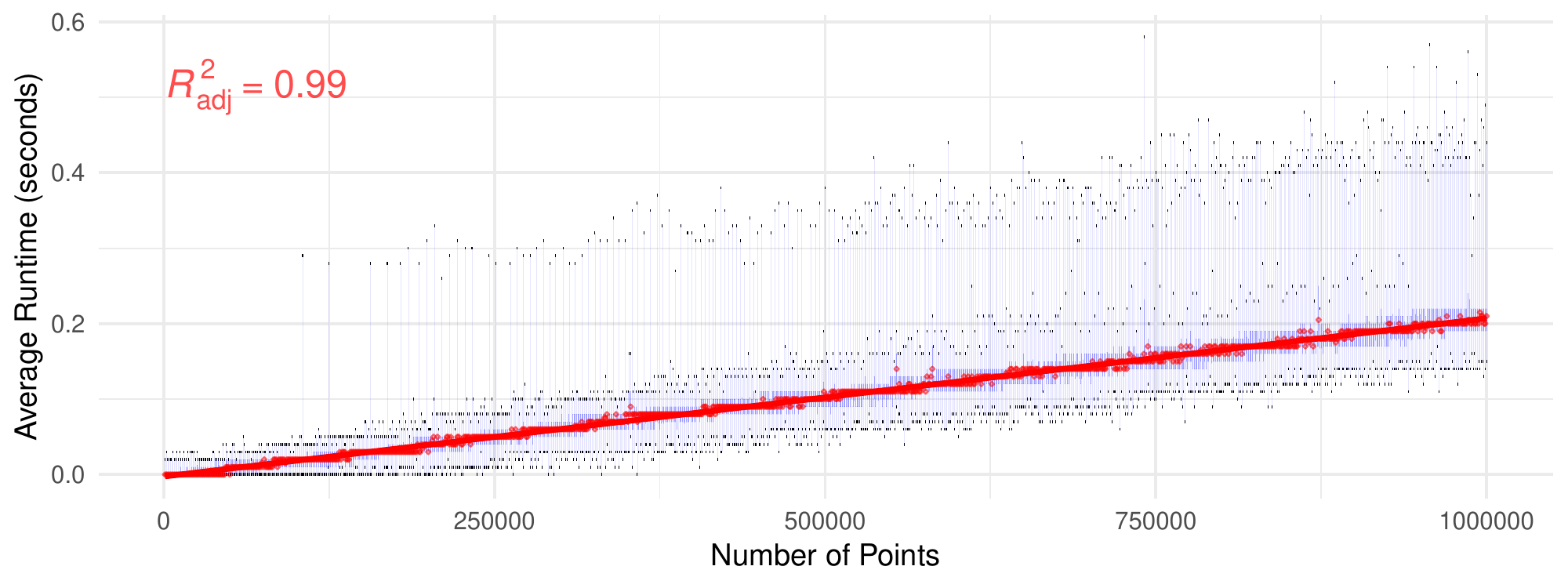}
\caption{Equal size and range.}
\label{equalboth}
\end{subfigure}\\
\begin{subfigure}{\textwidth}
 \centering
\includegraphics[width = \textwidth, height = 0.2\textwidth]{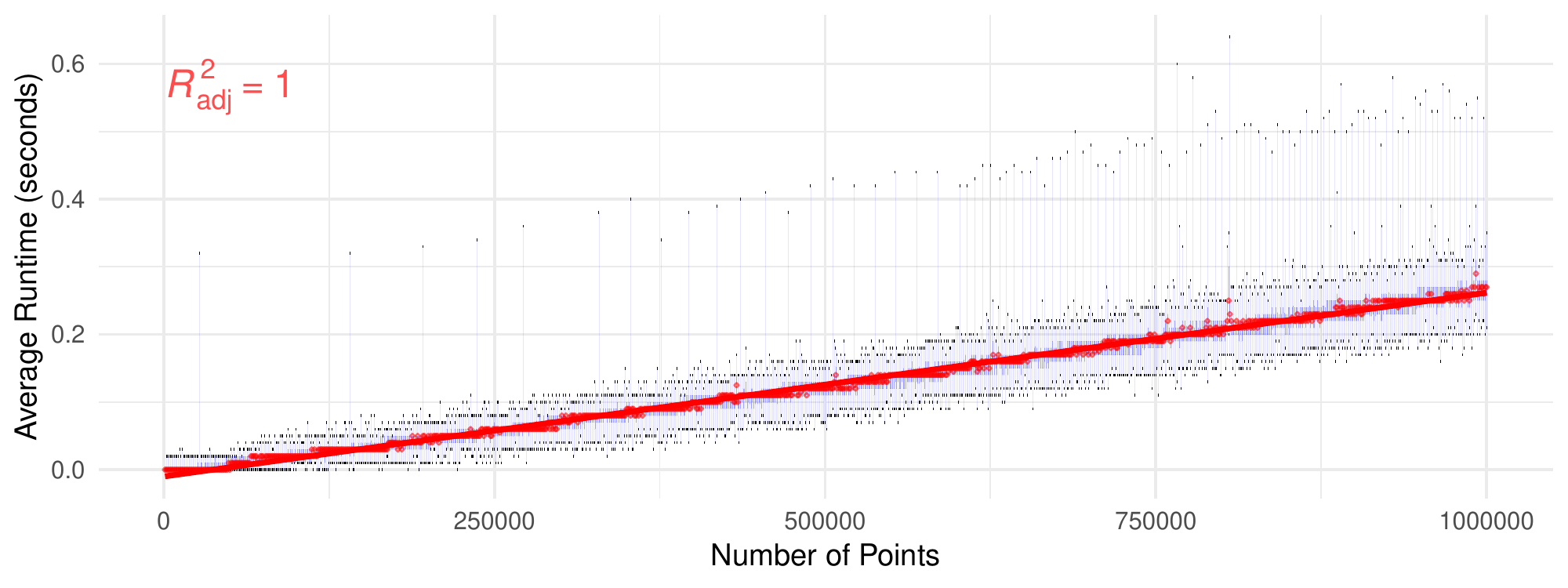}
\caption{Equal size but different range.}
\label{equalsizehalfrange}
\end{subfigure}\\
\begin{subfigure}{\textwidth}
 \centering
\includegraphics[width = \textwidth, height = 0.2\textwidth]{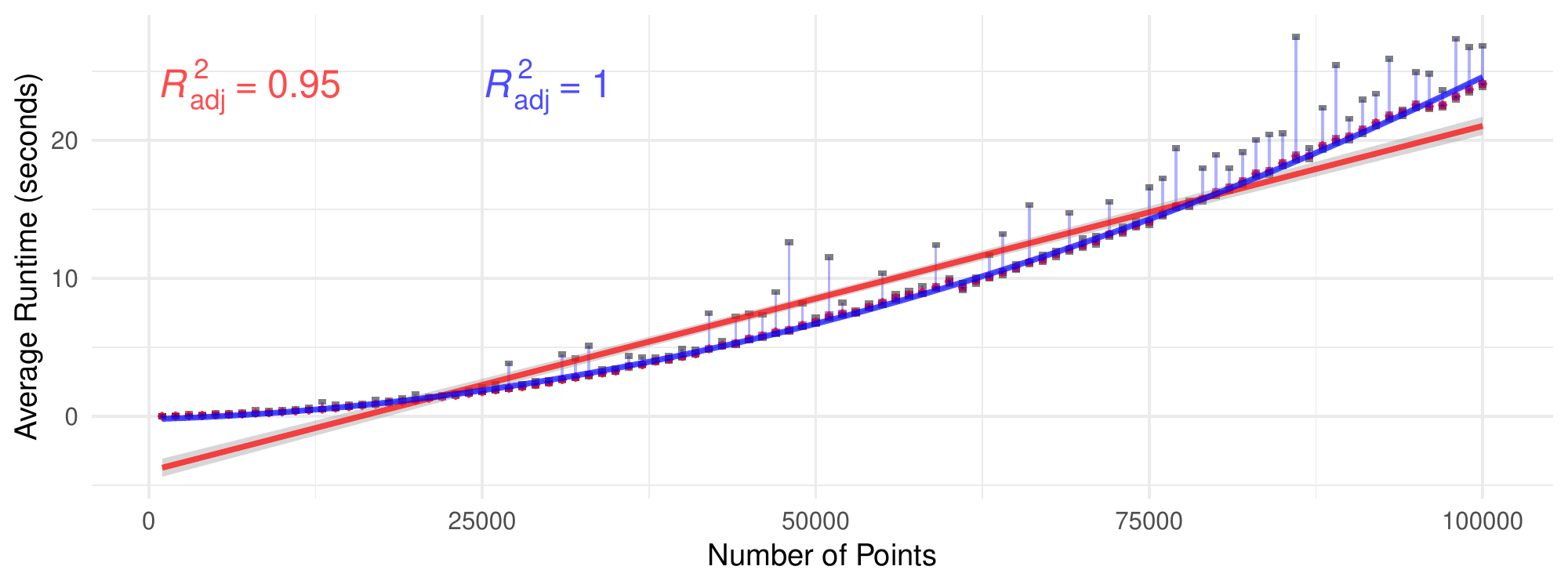}
\caption{Different size but equal range.}
\label{halfsizeequalrange}
\end{subfigure}\\
\begin{subfigure}{\textwidth}
 \centering
\includegraphics[width = \textwidth, height = 0.2\textwidth]{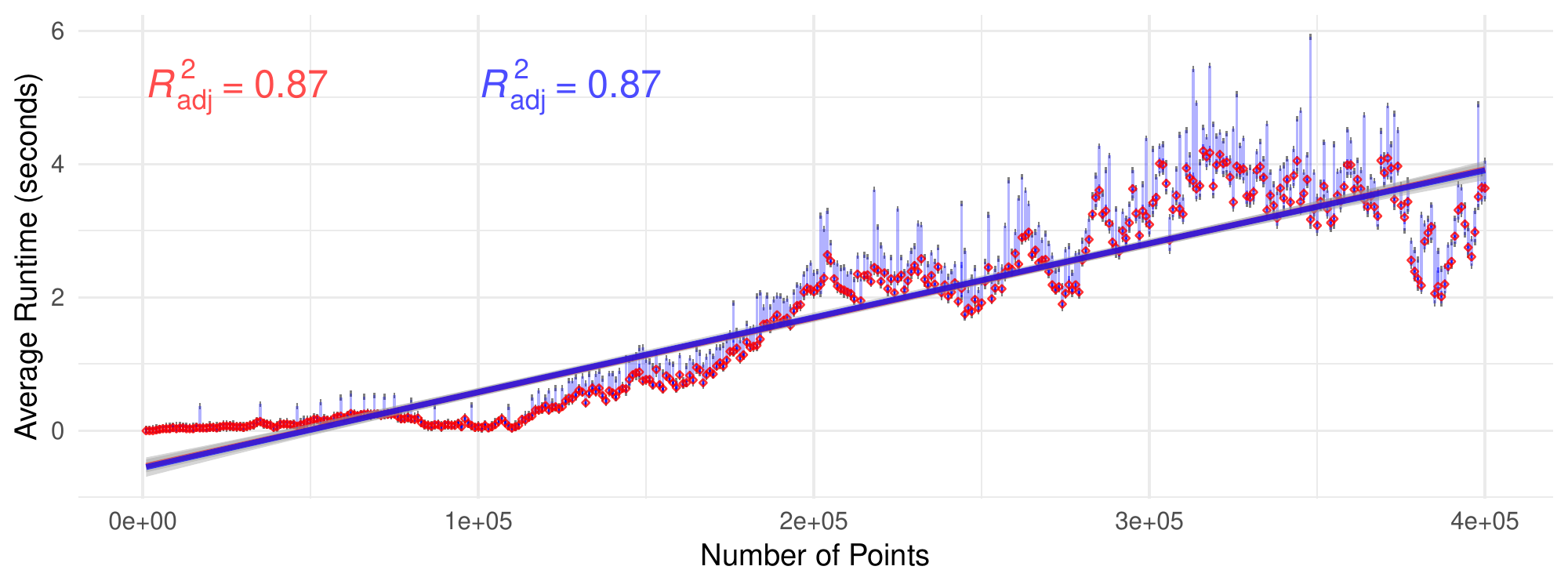}
\caption{Different size and range.}
\label{halfsizehalfrange}
\end{subfigure}\\
\caption{Median running time in the computation of bottleneck distance between two diagrams with varying size and range settings fitted with regression curves. Superimposed are the minimum and maximum running times over the 100-run simulation per unit of 1000 points to illustrate the running time range, and the narrow darker blue band to show the midspread.}
\label{linear}
\end{figure}

The final setting is where the second of two diagrams has half as many points with death values drawn from a range half as wide as that for the first. Regression curves are again shown in Figure \ref{halfsizehalfrange} with linear and quadratic fit both achieving $R^2 = 0.87$.

\section{\textsc{Lum\'awig} in Digit Classification}
With new access to a fast algorithm for computing dimension zero bottleneck distance, we leverage persistence and other clustering-based diagrams to craft features for digit classification. We classify 10,000 28$\times$28-pixel digit images in the MNIST data set via a random forest classifier. Similar to Garin and Tauzin \cite{mnist}, we train the classifier using features based on topological summaries. However, we depart from Garin and Tauzin's approach in that we only extract features from dimension zero persistence diagrams and other related clustering-based diagrams. In particular, we craft statistical summaries from distributions of bottleneck distances computed from diagrams resulting from dimension zero persistent homology and clustering of multiple sub-collections of points. We summarize this procedure next. For a detailed account of this procedure, we point the interested reader to \cite{ignacio2}  where it is used to recover higher dimensional shape information of digits from intrinsic clustering behavior. 

The first step in the procedure is to generate multiple collections of points from the digits via samples extracted based on point distributions referenced from nine pre-selected landmark points. We use the same landmark points introduced in \cite{mnist}. Sampling is also done across multiple resolutions by varying the number of points selected in every bin of every distribution histogram. Then,  for each sampled sub-collection of points in each  sampling resolution, persistent homology and clustering algorithms are respectively used to generate persistence and clustering diagrams. We gather diagrams by their sampling setting and compute pairwise bottleneck distances using \textsc{Lum\'awig}. Finally, we compute statistical summaries from the distributions of computed bottleneck distances, and use these to train a random forest classifier with 1000 trees. 

We perform a 10-fold cross validation on our training set of 10,000 digit images from MNIST, and report the summary of obtained $F_1$ scores in Table \ref{tab2}. The average class predictions of the random forest are summarized in the confusion matrix in Figure \ref{confusion}.
\begin{table}[h]
\caption{Summary of $F_1$ scores on a 10-fold cross validation.}
\centering
\label{tab2}

\begin{tabular}{lccccccccccc}
\toprule
Digit&0&1&2&3&4&5&6&7&8&9&Overall\\
\midrule
Mean&0.841&0.940&0.678&0.727&0.687&0.709&0.847&0.754&0.745&0.754&0.768\\
Std. Dev.&0.030&0.011&0.033&0.019&0.032&0
.037&0.030&0.020&0.015&0.032&0.011\\
\bottomrule
\end{tabular}
\end{table}

\begin{figure}
 \centering
\includegraphics[width = 0.35\textwidth, height = 0.38\textwidth]{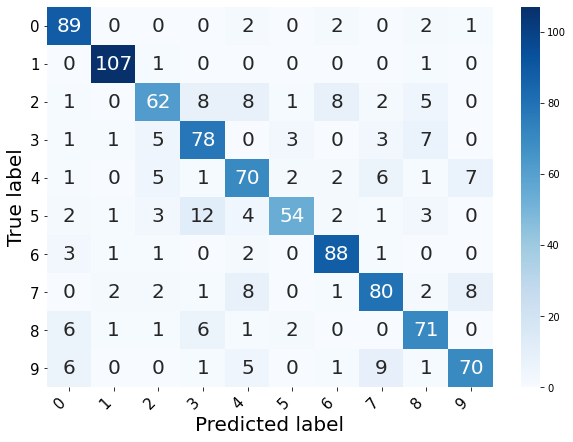}
\caption{Confusion matrix for the average prediction of the random forest over a 10-fold cross validation.}
\label{confusion}
\end{figure}

The results above show that the random forest classifier is able to use our crafted bottleneck-based features to classify, at a respectable level of accuracy, the 10 digits in the MNIST data set despite all digits possessing the same dimension zero topological signature of having only one connected component. In particular, we infer from the exceptionally high score on the classification of the simplest digit 1, that differences captured by the bottleneck distance in the clustering behavior across multiple point samples of this digit is outstandingly subtle, and hence different, from the rest. 
\section{Discussions and Conclusion}

Our benchmarking experiments reveal that \textsc{Lum\'awig} outperforms, by several orders of magnitude, all currently available implementations of dimension zero bottleneck distance in terms of running time. \textsc{Lum\'awig} also recovers the exact bottleneck distance produced by \textsc{Dionysus}. We believe this is a significant contribution as it affords a viable tool to process and utilize dimension zero persistence diagrams in comparing evolving connectivity information between large data sets in a manner that goes beyond the simple use of the most persistent components. Even now, a truly comprehensive and holistic treatment of information embedded in dimension zero persistence diagrams has been left unexplored due primarily to the lack of feasible machinery that can handle significant scaling up in data size. In fact, this note presents the first instance that the bottleneck distance is used in practice for data of magnitude and scale in the order of up to a million. In particular, we see that \textsc{Lum\'awig} only takes an average of 2 to 3 tenths of a second to compute the bottleneck distance between diagrams each having one million points.

A natural question to ask is whether a similar strategy of methodically modifying a specific initial bijection to recover all possible cases that yield the best matching for the general case, where birth times of features need not be at the beginning of the filtration (this covers the bottleneck distance for higher dimensional features) is possible. We note that an important first step is to induce an appropriate partial order on the points in each diagram that can accommodate a case-exhaustive approach to optimize the norm. Moreover, the added degree of freedom will naturally introduce cases we have not considered in our optimization step.

Our empirical tests suggest that \textsc{Lum\'awig} enjoys linear complexity for the case where both diagrams have equal number of points. Moreover, we also see that even for the special case revealed by Figure \ref{graphs}, where there is an apparent slowdown in computational time, the trend seen when data size scales up is also practically linear (see Figures \ref{halfsizeequalrange} and \ref{halfsizehalfrange}). In a future note, we plan to provide a more comprehensive analysis for complexity. Nevertheless, we are confident that \textsc{Lum\'awig} can be useful in practical applications of TDA at this stage. 

Finally, our application on digit classification showcases, in the same significant way as Weber et al. did in \cite{traffic}, the potential in leveraging persistence diagrams and bottleneck distance as sources of novel features for machine learning tasks. It is our hope that \textsc{Lum\'awig} contributes in paving the way for this direction in TDA research. 

\section{Repository for \textsc{Lum\'awig}}
A repository for \textsc{Lum\'awig} will eventually be set up and maintained as soon as licences, copyright certificates, and other clearances are secured. 

\section{Acknowledgement}
P.S.I. would like to thank the University of the Philippines Baguio for the research load credit he received to conduct this work.


\end{document}